\titleformat*{\section}{\bf\large\center\uppercase} 
\theoremstyle{definition}
\newtheorem{assumption}{Assumption}
\newtheorem{theorem}{Theorem}
\newtheorem{lemma}{Lemma}
\newtheorem{example}{Example}
\newtheorem{corollary}{Corollary}
\apptocmd{\sloppy}{\hbadness 10000\relax}{}{} 
\begin{document}
\doublespacing
\title{\bf Covariate adjustment in randomization-based causal inference for $2^K$ factorial designs}
\author{Jiannan Lu\footnote{Address for correspondence: Jiannan Lu, Microsoft Corporation, One Microsoft Way, Redmond, Washington 98052, USA.
Email: \texttt{jiannl@microsoft.com}} \\ Analysis and Experimentation, Microsoft Corporation}

\date{}
\maketitle
\begin{abstract}
We develop finite-population asymptotic theory for covariate adjustment in randomization-based causal inference for $2^K$ factorial designs. In particular, we confirm that both the unadjusted and the covariate-adjusted estimators of the factorial effects are asymptotically unbiased and normal, and the latter is more precise than the former.
\end{abstract}

\textbf{Keywords:} Potential outcome; Variance reduction; Finite-population asymptotics

\section{Introduction}\label{sec:intro}

Randomization is often considered the gold standard for causal inference \citep{Rubin:2008}. A well-established methodology to conduct causal inference is the potential outcomes framework \citep{Neyman:1923, Rubin:1974}, which defines the causal effect of a binary treatment factor as the comparison between the potential outcomes under treatment and control. In the presence of multiple binary treatment factors, we can evaluate them simultaneously under the $2^K$ factorial design framework \citep{Fisher:1935, Yates:1937}. Several researchers \citep[e.g.,][]{Kempthrone:1952, Kempthrone:1955, Wilk:1956, Bailey:1981, Bailey:1991, Dasgupta:2015} advocated conducting randomization-based causal inference for $2^K$ factorial designs, which has several advantages over the widely-used regression-based inference. For example, randomization-based inference is applicable to the finite-population setting, and therefore may be more reasonable in practice \citep[e.g.,][]{Miller:2006, Lu:2015}. For more discussion on the comparison and reconciliation of randomization-based and regression-based inferences for $2^K$ factorial designs, see \cite{Lu:2016}.

In randomization-based causal inference, covariate adjustment \citep{Cochran:1977} is a variance reduction technique widely used by researchers \citep[e.g.,][]{Deng:2013, Miratrix:2013}. In an illuminating paper, \cite{Lin:2013} demonstrated the advantages of performing covariate adjustment for randomized treatment-control studies (i.e., $2^1$ factorial designs). However, to our best knowledge, for $2^K$ factorial designs which are of great importance from both theoretical and practical perspectives, similar discussions appear to be absent; it is unclear whether covariate adjustment is beneficial for $2^K$ factorial designs, and if so, how to quantify said benefit. In this paper we answer this question, by extending the discussions in \cite{Lin:2013} and illustrating the advantages of performing covariate adjustment in $2^K$ factorial designs. To be specific, we derive the closed-form expressions for the asymptotic precisions of the unadjusted and covariate-adjusted estimators, and thus accurately measure the precision gained by covariate adjustment.

The paper proceeds as follows. Section \ref{sec:2k} reviews randomization-based inference for $2^K$ factorial designs. Section \ref{sec:def} introduces the covariate-adjusted estimator for $2^K$ factorial designs. Section \ref{sec:adj} derives the asymptotic precisions of the unadjusted and covariate-adjusted estimators. Section \ref{sec:discuss} concludes and discusses possible future directions.

\section{Randomization Inference for $2^K$ Factorial Designs}\label{sec:2k}

In this section, we review the randomization-based inference framework for $2^K$ factorial designs \citep{Dasgupta:2015, Lu:2016}. For consistency we adopt the notations in \cite{Lu:2016}.

\subsection{$2^K$ factorial designs}

$2^K$ factorial designs consist of $K$ distinct treatment factors, each of which has two levels coded as -1 and 1. To simplify future notations we let $J=2^K.$ To define $2^K$ factorial designs, we rely on a $J \times J$ orthogonal matrix $\bm H = (\bm h_0, \ldots, \bm h_{J-1}),$ whcih is often referred to as the the model matrix \citep{Wu:2009}. We construct the model matrix in the following recursive way \citep{Espinosa:2016, Lu:2016}: 
\begin{enumerate}
\item Let $\bm h_0 = \bm 1_J;$ 

\item For $k=1,\ldots,K$, construct $\bm h_k$ by letting its first $2^{K-k}$ entries be -1, the next $2^{K-k}$ entries be 1, and repeating $2^{k-1}$ times;

\item If $K \ge 2,$ order all subsets of $\{1, \ldots, K\}$ with at least two elements, first by cardinality and then lexicography. For $k = 1, \ldots J-1-K,$ let $\sigma_k$ be the $k$th subset and $\bm h_{K+k} = \prod_{l \in \sigma_k} \bm h_l,$ where ``$\prod$'' stands for entry-wise product.

\end{enumerate}
The $j$th row of the sub-matrix $\tilde{\bm H} = (\bm h_1, \ldots, \bm h_K)$ is the $j$th treatment combination $\bm z_j.$ To further illustrate the construction of the model matrix, we adopt the example in \cite{Lu:2016}.

\begin{example}
Let $K=2.$ By following the above recursive procedure, we obtain 
$
\bm h_0 = \bm 1,
$
$
\bm h_1 = (-1, -1, 1, 1)^\prime,
$
$
\bm h_2=(-1, 1, -1, 1)^\prime,
$
and
$
\bm h_3=(1, -1, -1, 1)^\prime.
$
Consequently, for $2^2$ factorial designs the model matrix is:
\begin{equation*}
\bm H =
\bordermatrix{& \bm h_0 & \bm h_1& \bm h_2 & \bm h_3\cr
              & 1  & -1 &  -1  & 1 \cr
              & 1 & -1 &  1  & -1 \cr
              & 1  & 1 &  -1  & -1 \cr
              & 1 & 1 & 1 & 1}.
\end{equation*}
The four treatment combinations are $\bm z_1=(-1, -1),$ $\bm z_2=(-1, 1),$ $\bm z_3=(1, -1)$ and $\bm z_4=(1, 1).$
\end{example}

\subsection{Randomization-based Inference}

We allow $N \ge 2J$ experimental units in the design. To describe the randomization-based inference framework, we follow a three-step procedure.

First, under the Stable Unit Treatment Value Assumption \citep{Rubin:1980} that for $j=1, \ldots, J$ there is only one version of the treatment combination $\bm z_j,$ and no interference among the experimental units, let $Y_i(\bm z_j)$ be the potential outcome of unit $i$ under treatment combination $\bm z_j,$ and
$
\bar Y(\bm z_j) = N^{-1} \sum_{i=1}^N Y_i(\bm z_j)
$
be the average potential outcome across all the experimental units. Let 
$
\bm Y_i = \{ Y_i(\bm z_1), \ldots, Y_i(\bm z_J) \}^\prime
$
and
$
\bar{\bm Y} = \{ \bar Y(\bm z_1), \ldots, \bar Y(\bm z_J) \}^\prime.
$

Next, we randomly assign $n_j \ge 2$ units to treatment combination $\bm z_j.$ Let
\begin{equation*}
W_i(\bm z_j)
= 
\begin{cases}
1, & \text{if unit $i$ is assigned treatment $\bm z_j,$ } \\
0, & \text{otherwise,} \\
\end{cases}
\end{equation*}
and let
$
Y_i^\textrm{obs} = \sum_{j=1}^J W_i(\bm z_j) Y_i(\bm z_j)
$
be the observed outcome for unit $i,$ and therefore the average observed outcome across all experimental units that are assigned to treatment combination $\bm z_j$ is
$
\bar Y^\textrm{obs}(\bm z_j) = n_j^{-1} \sum_{i=1}^N W_i(\bm z_j)Y_i(\bm z_j).
$
Furthermore, we let
$
\bar{\bm Y}^\textrm{obs} = \{ \bar Y^\textrm{obs}(\bm z_1) , \ldots, \bar Y^\textrm{obs}(\bm z_J) \}^\prime.
$

Finally, we define the factorial effects as
\begin{equation*}
\tau(l) = \frac{1}{2^{K-1}} \bm h_l^\prime \bar{\bm Y}
\quad
(l = 1, \ldots, J-1),
\end{equation*}
and their randomization-based estimators as
\begin{equation}\label{eq:est}
\hat \tau_\textrm{rb}(l) =  \frac{1}{2^{K-1}} \bm h_l^\prime \bar{\bm Y}^\textrm{obs}
\quad
(l = 1, \ldots, J-1).
\end{equation}
Its randomness is solely from the treatment assignment $W_i(\bm z_j)$'s.

\section{Covariate Adjustment in $2^K$ Factorial Designs}\label{sec:def}
 
The idea behind the randomization-based estimator is estimating the average potential outcome $\bar Y(\bm z_j)$ by its corresponding average observed outcome $\bar Y^\textrm{obs}(\bm z_j).$ However, as shown in \cite{Cochran:1977} and later mentioned in \cite{Lin:2013}, utilizing the pre-treatment covariates can potentially improve the precision of $\bar Y^\textrm{obs}(\bm z_j),$ and consequently that of the randomization-based estimator. With this classic wisdom, we define the covariate-adjusted estimator for $2^K$ factorial designs. In this paper, we consider the method of covariate adjustment where separate slope coefficients are estimated for each average potential outcome $\bar Y(\bm z_j),$ unlike the traditional ANCOVA method in which there is only one pooled slope coefficient. The rationale behind this is from the existing literature on covariate adjustment in randomized treatment-control studies -- as shown in \cite{Freedman:2008} and \cite{Lin:2013}, the traditional ANCOVA can potentially help or hurt asymptotic precision, however the ``separate slope'' method guarantees asymptotic precision improvement.

Let $\bm X_i = (X_{i1}, \ldots, X_{ip} )^\prime$ be the pre-treatment covariates of the unit $i,$ and  
$
\bar X_k = N^{-1} \sum_{i=1}^N X_{ik}
$
and
$
\bar X_k^\textrm{obs}(\bm z_j) = n_j^{-1} \sum_{i=1}^N W_i(\bm z_j) X_{ik}
$
be the average of the $k$th covariate of all units and those assigned to treatment $\bm z_j.$ Let
$
\bar {\bm X} = (\bar X_1, \ldots, \bar X_p )^\prime,
$
and
$
\bar {\bm X}^\textrm{obs}(\bm z_j) = 
\{
\bar X_1^\textrm{obs}(\bm z_j), \ldots, \bar X_p^\textrm{obs}(\bm z_j) 
\}^\prime.
$
Consider the following type of estimators for $\bar Y(\bm z_j):$
\begin{equation*}
\bar Y^\textrm{obs}(\bm z_j) +  \{ \bar{\bm X} - \bar{\bm X}^\textrm{obs}(\bm z_j) \}^\prime \bm \beta_j,
\end{equation*}
where $\bm \beta_j$ is a constant vector to be determined. As shown in \cite{Cochran:1977}, the value of $\bm \beta_j$ that minimizes the variance of the above is
\begin{equation}\label{eq:true-reg-coef}
\bm \beta_j = \left\{ \frac{1}{N} \sum_{i=1}^N (\bm X_i - \bar{\bm X}) (\bm X_i - \bar{\bm X})^\prime \right\}^{-1} \left[ \frac{1}{N} \sum_{i=1}^N (\bm X_i - \bar{\bm X}) \{Y_i(\bm z_j) - \bar Y(\bm z_j)\} \right],
\end{equation}
which we assume to be well-defined, i.e., the ``design matrix''
$
N^{-1} \sum_{i=1}^N (\bm X_i - \bar{\bm X}) (\bm X_i - \bar{\bm X})^\prime    
$
is invertible. We estimate \eqref{eq:true-reg-coef} by the plug-in method:
\begin{equation}\label{eq:est-reg-coef}
\hat{\bm \beta}_j = \left\{ \frac{1}{N} \sum_{i=1}^N (\bm X_i - \bar{\bm X}) (\bm X_i - \bar{\bm X})^\prime \right\}^{-1} \left[ \frac{1}{n_j} \sum_{i=1}^N W_i(\bm z_j)(\bm X_i - \bar{\bm X}) \{Y_i(\bm z_j) - \bar Y^\textrm{obs}(\bm z_j)\} \right],
\end{equation}
and let
\begin{equation}\label{eq:adj-single}
\bar Y^\textrm{ca}(\bm z_j) = \bar Y^\textrm{obs}(\bm z_j) +  \{ \bar{\bm X} - \bar{\bm X}^\textrm{obs}(\bm z_j) \}^\prime \hat{\bm \beta}_j
\quad
(j = 1, \ldots, J).
\end{equation}
Consequently, we define the covariate-adjusted estimator as
\begin{equation}\label{eq:adj}
\hat \tau_\textrm{ca}(l) 
=  
\frac{1}{2^{K-1}} {\bm h}_l^\prime \bar {\bm Y}^\textrm{ca}
\quad
(l = 1, \ldots, J - 1),
\end{equation}
where $\bar {\bm Y}^\textrm{ca} = \{ \bar Y^\textrm{ca}(\bm z_1), \ldots, \bar Y^\textrm{ca}(\bm z_J) \}^\prime.$

\section{Finite-Population Asymptotic Analysis}\label{sec:adj}

\subsection{Notations and Assumptions}

Consider a hypothetical sequence of finite populations with increasing sample sizes. Technically, all the finite-population quantities should have superscripts that index the sequence of populations, for example
$
\bar{ \bm X }^{(N)} = N^{-1} \sum_{i=1}^N  \bm X_i.
$
For convenience we drop all superscripts. We make the following assumptions to conduct the finite-population asymptotic analysis, and the first assumption merely serves the purpose of simplifying notations.

\begin{assumption}\label{assum:0}
For all $N,$ the potential outcomes and covariates are centered at zeros, i.e.,
$
\bar { \bm X } =  0, 
$
and
$
\bar{ \bm Y } =  0.
$
\end{assumption}

\begin{assumption}\label{assum:1}
When $N \rightarrow \infty,$ the proportions of units assigned to all treatment combinations converge to positive constants, i.e.,
\begin{equation*}
\hat p_j = \frac{n_j}{N} \rightarrow p_j
\quad
(j=1, \ldots, J),
\end{equation*} 
where $p_j > 0$ for all $j$ and $\sum_{j=1}^J p_j = 1.$
\end{assumption}

\begin{assumption}\label{assum:2}
When $N \rightarrow \infty,$ all the second moments of the potential outcomes converge to constants, i.e.,
\begin{equation*}
\frac{1}{N} \sum_{i=1}^N  \bm Y_i \bm Y_i^\prime 
\rightarrow 
\bm \Sigma = (\sigma_{jj^\prime})_{1 \le j, j^\prime \le J}
\end{equation*}
where $\sigma_{jj} > 0$ for all $j.$ All the second moments of the covariates converge to constants, i.e.,
\begin{equation*}
\frac{1}{N} \sum_{i=1}^N  \bm X_i \bm X_i^\prime 
\rightarrow
\bm \Omega = (\omega_{ll^\prime})_{1 \le l,l^\prime \le p},
\end{equation*}
where $\bm \Omega$ is an invertible matrix. All the mixed second moments of the potential outcomes and the covariates converge to constants, i.e.,
\begin{equation*}
\frac{1}{N} \sum_{i=1}^N  \bm X_i Y_i( \bm z_j)\rightarrow  \bm \lambda_j
\quad
(j = 1, \ldots, J).
\end{equation*}
\end{assumption}

\begin{assumption}\label{assum:3}
For all $N,$ the fourth moments of the potential outcomes and the covariates are uniformly bounded from above by a positive constant, i.e.,
\begin{equation*}
\frac{1}{N} \sum_{i=1}^N Y^4_i(\bm z_j) \leq L
\quad
(j=1, \ldots, J);
\quad
\frac{1}{N} \sum_{i=1}^N X_{ik}^4 \leq L
\quad
(k = 1, \ldots, p).
\end{equation*}
\end{assumption}

We introduce several useful notations before moving forward. Let
\begin{equation*}
\bm \zeta_j = \bm \Omega^{-1} \bm \lambda_j
\quad
(j=1, \ldots, J);
\quad
R_i( \bm z_j) = Y_i( \bm z_j ) -  \bm X_i^\prime \bm \zeta_j
\quad
(i=1, \ldots, N),
\end{equation*}
and $ \bm R_i = \{ R_i( z_1), \ldots, R_i( z_J) \}^\prime.$ Consequently, 
\begin{equation}\label{eq:po-adj-second-moments}
\frac{1}{N} \sum_{i=1}^N  \bm R_i \bm R_i^\prime 
\rightarrow 
\tilde { \bm \Sigma} = (\tilde \sigma_{jj^\prime})_{1 \le j,j^\prime \le J},
\end{equation}
where
$
\tilde \sigma_{jj^\prime} 
= \sigma_{jj^\prime} -  \bm \lambda_j^\prime  \bm \Omega^{-1}  \bm \lambda_{j^\prime}.
$

\subsection{Useful Lemmas}

For finite-population asymptotic analysis of the randomization-based estimator and the covariate-adjusted estimator, we rely on the following lemmas, which are also of independent interests. The first lemma is the Combinatorial Central Limit Theorem from \cite{Hoeffding:1951}.

\begin{lemma}
\label{lemma:0} 
For fixed $N \in \mathbb Z^+,$ and $N^2$ constants
$
a_{jk}
$
$
(j,k = 1, \ldots, N),
$
let
\begin{equation}\label{eq:lemma-0-0}
b_{jk} 
= a_{jk} - \frac{1}{N} \sum_{j^\prime = 1}^N a_{j^\prime k} - \frac{1}{N} \sum_{k^\prime = 1}^N a_{jk^\prime} + \frac{1}{N^2} \sum_{j^\prime = 1}^N \sum_{k^\prime = 1}^N a_{j^\prime k^\prime}.
\end{equation}
Furthermore, let $(\nu_1, \ldots, \nu_N)$ be a random permutation of $(1, \ldots, N)$ and
$
S = \sum_{j=1}^N a_{j, \nu_j}.  
$
If
\begin{equation}\label{eq:lemma-0-1}
\lim_{N\rightarrow \infty}
\max_{1 \le j,k \le N} b^2_{jk}
\big /
\left(
\frac{1}{N} \sum_{j = 1}^N \sum_{k = 1}^N b^2_{jk} 
\right)
= 0,
\end{equation}
then when $N \rightarrow \infty,$
\begin{equation*}
\frac{S - E(S)}{\{\mathrm{Var} (S)\}^{1/2}}
\stackrel{\mathbb D}{\longrightarrow}
N(0, 1).
\end{equation*}
\end{lemma}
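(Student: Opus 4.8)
Here is how I would prove Lemma~\ref{lemma:0}. The plan is to first reduce the statistic to one built from the doubly-centered array $b_{jk}$, then compute its exact variance, and finally establish asymptotic normality by the method of moments. The first step is to show that $S-E(S)=\sum_{j=1}^N b_{j,\nu_j}$. Since $\nu$ is a uniform random permutation, $E(a_{j,\nu_j})=N^{-1}\sum_k a_{jk}$, whence $E(S)=N^{-1}\sum_{j,k}a_{jk}$. Substituting the definition \eqref{eq:lemma-0-0} into $\sum_j b_{j,\nu_j}$ and using that $\nu$ hits each column exactly once, the three centering terms each collapse to the constant $N^{-1}\sum_{j,k}a_{jk}$, and one checks that $\sum_j b_{j,\nu_j}=S-E(S)$. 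The payoff is that $b_{jk}$ has vanishing row and column sums, $\sum_j b_{jk}=\sum_k b_{jk}=0$, which drives every subsequent computation.

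Next I would compute the variance. Splitting $E[(\sum_j b_{j,\nu_j})^2]$ into diagonal and off-diagonal parts and using $E[b_{j,\nu_j}^2]=N^{-1}\sum_k b_{jk}^2$ together with $E[b_{j,\nu_j}b_{j',\nu_{j'}}]=\{N(N-1)\}^{-1}\sum_{k\ne k'}b_{jk}b_{j'k'}$ for $j\ne j'$, the zero-sum conditions turn the cross terms into $-\{N(N-1)\}^{-1}\sum_k b_{jk}b_{j'k}$. Summing yields the clean identity
\begin{equation*}
\mathrm{Var}(S)=\frac{1}{N-1}\sum_{j,k}b_{jk}^2 .
\end{equation*}
In particular $N^{-1}\sum_{j,k}b_{jk}^2=\{(N-1)/N\}\,\mathrm{Var}(S)$, so condition \eqref{eq:lemma-0-1} says exactly that $\max_{j,k}b_{jk}^2/\mathrm{Var}(S)\to 0$, a Lindeberg-type uniform-smallness requirement.

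Finally I would establish the CLT by the method of moments, following \cite{Hoeffding:1951}. Writing $W=\{S-E(S)\}/\{\mathrm{Var}(S)\}^{1/2}$, the goal is to show $E(W^m)\to E(Z^m)$ for every fixed $m$, where $Z\sim N(0,1)$; since the normal law is determined by its moments, this gives convergence in distribution. Expanding $E[(\sum_j b_{j,\nu_j})^m]$ over tuples $(j_1,\dots,j_m)$, the expectation $E[b_{j_1,\nu_{j_1}}\cdots b_{j_m,\nu_{j_m}}]$ depends only on the coincidence pattern of the row indices and of the images $\nu_{j_1},\dots,\nu_{j_m}$, and the vanishing row/column sums force every tuple in which some index occurs exactly once to contribute zero. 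After normalization the perfect-pairing patterns reproduce the Gaussian moments, $E(Z^m)=(m-1)!!$ for even $m$ and $0$ for odd $m$, while every pattern with a coincidence of multiplicity three or higher carries a factor controlled by the ratio in \eqref{eq:lemma-0-1} and is therefore negligible.

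The main obstacle is precisely this last combinatorial bookkeeping: cataloguing which coincidence patterns survive, verifying that the surviving non-pairing patterns are $o(1)$ relative to the leading term, and bounding the number of such patterns uniformly in $N$. A cleaner alternative that sidesteps the moment calculus is Stein's method via an exchangeable pair: obtain $\nu'$ from $\nu$ by a uniformly random transposition, verify the approximate linear-regression condition $E(W'-W\mid\nu)=-\lambda W+o(\lambda)$ for an explicit small $\lambda$, and bound the conditional-variance remainder using \eqref{eq:lemma-0-1}; this route delivers the same conclusion together with a Berry--Esseen rate.
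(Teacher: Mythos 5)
The paper does not actually prove this lemma: it is quoted as the Combinatorial Central Limit Theorem of \cite{Hoeffding:1951}, and the citation stands in for the proof. So there is no in-paper argument to compare against; what you have written is essentially a reconstruction of Hoeffding's original method-of-moments proof. Your two preliminary reductions are correct and complete: $S-E(S)=\sum_{j}b_{j,\nu_j}$ follows exactly as you say (each centering term collapses to $N^{-1}\sum_{j,k}a_{jk}$ because $\nu$ is a permutation), and the zero row/column sums give $\mathrm{Var}(S)=(N-1)^{-1}\sum_{j,k}b_{jk}^2$, so that condition \eqref{eq:lemma-0-1} is equivalent to $\max_{j,k}b_{jk}^2/\mathrm{Var}(S)\rightarrow 0$.

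The gap is in the moment step, which is the actual content of the theorem. Your claim that the vanishing row/column sums ``force every tuple in which some index occurs exactly once to contribute zero'' is false as stated: under permutation sampling the $\nu_j$'s are dependent, so the expectation of a product does not factor across distinct rows. Indeed, for $j\ne j'$ your own variance computation gives $E[b_{j,\nu_j}b_{j',\nu_{j'}}]=-\{N(N-1)\}^{-1}\sum_k b_{jk}b_{j'k}$, which is generally nonzero even though both row indices are singletons; summed over $j \ne j'$ these singleton-pair terms contribute exactly $\mathrm{Var}(S)/N$, i.e.\ they are asymptotically negligible but not identically zero. Establishing that all non-pairing coincidence patterns are negligible requires the inclusion--exclusion over coincidence patterns of both row and column indices, together with bounds such as $N^{-1}\sum_{j,k}|b_{jk}|^r\le\left(\max_{j,k}|b_{jk}|\right)^{r-2}N^{-1}\sum_{j,k}b_{jk}^2$, which convert \eqref{eq:lemma-0-1} into control of every higher moment ratio; that bookkeeping \emph{is} Hoeffding's proof, and you explicitly defer it (the Stein's-method alternative is likewise named but not executed). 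So the proposal is an accurate roadmap rather than a proof --- though for the purposes of this paper, simply citing \cite{Hoeffding:1951}, as the author does, is sufficient.
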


The second lemma is essentially the ``$2^K$ factorial design version'' of the multivariate finite-population Central Limit Theorem in \cite{Freedman:2008}. However, we provide a rigorous proof in this paper, where \cite{Freedman:2008} did not. 
 
\begin{lemma}
\label{lemma:1} 
When $N \rightarrow \infty,$
\begin{equation*}
N^{1/2} \bar{ \bm Y }^\textrm{obs} 
\stackrel{\mathbb D}{\longrightarrow} 
N( 0,  \bm \Sigma^\textrm{obs}),
\end{equation*}
where
\begin{equation*}
\bm \Sigma^\textrm{obs} = 
\begin{bmatrix}
    \frac{1-p_1}{p_1} \sigma_{11} & -\sigma_{12} & \dots  & -\sigma_{1 J} \\
   -\sigma_{21} & \frac{1-p_2}{p_2} \sigma_{22} & \ldots & -\sigma_{2 J} \\
    \vdots  & \vdots & \ddots & \ldots \\
    -\sigma_{J 1}  & \dots  & \ldots & \frac{1-p_J}{p_J} \sigma_{JJ} \\
\end{bmatrix}.
\end{equation*}
\end{lemma}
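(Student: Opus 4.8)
The plan is to establish the multivariate convergence through the Cram\'er--Wold device, reducing the statement to a one-dimensional limit that the combinatorial central limit theorem (Lemma~\ref{lemma:0}) can deliver. Fix an arbitrary $\bm t = (t_1, \ldots, t_J)^\prime \in \mathbb{R}^J$; it suffices to show that $N^{1/2} \bm t^\prime \bar{\bm Y}^\textrm{obs} \stackrel{\mathbb D}{\longrightarrow} N(0, \bm t^\prime \bm \Sigma^\textrm{obs} \bm t)$, since verifying this for every $\bm t$ identifies the limiting law of $N^{1/2}\bar{\bm Y}^\textrm{obs}$ as $N( 0, \bm\Sigma^\textrm{obs})$.

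Second, I would realize the completely randomized assignment as a single uniform random permutation, so that Lemma~\ref{lemma:0} applies verbatim. Label the $N$ ``slots'' so that the first $n_1$ correspond to $\bm z_1$, the next $n_2$ to $\bm z_2$, and so on, and write $g(k)$ for the treatment index attached to slot $k$. Drawing a uniform random permutation $(\nu_1, \ldots, \nu_N)$ of $(1, \ldots, N)$ and assigning unit $\nu_k$ to treatment $\bm z_{g(k)}$ reproduces the design exactly. Setting $a_{ki} = N^{1/2} t_{g(k)} n_{g(k)}^{-1} Y_i(\bm z_{g(k)})$, a direct rearrangement yields $N^{1/2}\bm t^\prime \bar{\bm Y}^\textrm{obs} = \sum_{k=1}^N a_{k, \nu_k} =: S$, precisely the statistic $S$ of Lemma~\ref{lemma:0}.

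Third, I would pin down the first two moments of $S$. By Assumption~\ref{assum:0} the outcomes are centered, so $E(S) = N^{1/2}\bm t^\prime \bar{\bm Y} = 0$. For the variance I would invoke the classical Neyman-type second-moment formulas for a completely randomized experiment, which give $N\,\mathrm{Var}\{\bar Y^\textrm{obs}(\bm z_j)\} \to (1-p_j)p_j^{-1}\sigma_{jj}$ and, for $j \ne j^\prime$, $N\,\mathrm{Cov}\{\bar Y^\textrm{obs}(\bm z_j), \bar Y^\textrm{obs}(\bm z_{j^\prime})\} \to -\sigma_{jj^\prime}$, the negative off-diagonal terms arising from the without-replacement structure of the assignment and the convergence relying on Assumptions~\ref{assum:1} and~\ref{assum:2}. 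Assembling these entries produces $\mathrm{Var}(S) = N\,\mathrm{Var}(\bm t^\prime \bar{\bm Y}^\textrm{obs}) \to \bm t^\prime \bm\Sigma^\textrm{obs}\bm t$; this is a routine though somewhat lengthy covariance computation, so I would not dwell on it.

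The main obstacle will be verifying the negligibility condition \eqref{eq:lemma-0-1}, which is exactly where the fourth-moment bound is needed. Writing $b_{ki}$ for the doubly centered array in \eqref{eq:lemma-0-0}, the denominator of \eqref{eq:lemma-0-1} equals $(N-1)N^{-1}\mathrm{Var}(S)$ and hence converges to the constant $\bm t^\prime\bm\Sigma^\textrm{obs}\bm t$ (if this limit is zero the claim is immediate, since then $S \to 0$ in probability). For the numerator, centering gives $b_{ki}^2 \le 16\max_{k,i} a_{ki}^2$, and from $a_{ki}^2 \le N\,(\max_j t_j^2)(\min_j n_j)^{-2}\max_{j,i}Y_i(\bm z_j)^2$ I would use Assumption~\ref{assum:3} in the form $\max_i Y_i(\bm z_j)^2 \le (NL)^{1/2}$ together with $n_j \sim p_j N$ from Assumption~\ref{assum:1} to conclude $\max_{k,i}a_{ki}^2 = O(N^{-1/2}) \to 0$. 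This forces the ratio in \eqref{eq:lemma-0-1} to vanish, so Lemma~\ref{lemma:0} gives $\{S - E(S)\}/\{\mathrm{Var}(S)\}^{1/2} \stackrel{\mathbb D}{\longrightarrow} N(0,1)$; combining this with the moment limits of the third step through Slutsky's theorem yields $S \stackrel{\mathbb D}{\longrightarrow} N(0, \bm t^\prime\bm\Sigma^\textrm{obs}\bm t)$, and Cram\'er--Wold completes the proof.
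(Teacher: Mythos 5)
Your proposal is correct and follows the paper's skeleton exactly: a Cram\'er--Wold reduction, the same block array $a_{ki} = N^{1/2} t_{g(k)} n_{g(k)}^{-1} Y_i(\bm z_{g(k)})$, the Neyman-type moment formulas, and an application of Lemma~\ref{lemma:0}. Where you genuinely diverge is in how you verify condition \eqref{eq:lemma-0-1}, and your route is cleaner on both ends. For the denominator, you invoke Hoeffding's identity $\mathrm{Var}(S) = (N-1)^{-1}\sum_{j,k} b_{jk}^2$, so that $N^{-1}\sum_{j,k} b_{jk}^2 \rightarrow \bm t^\prime \bm \Sigma^\textrm{obs} \bm t$ follows at once from the variance computation you already did; the dichotomy ``either $\bm t^\prime \bm \Sigma^\textrm{obs} \bm t > 0$, so the lemma applies, or it is zero and $S \stackrel{\mathbb P}{\longrightarrow} 0$ by Chebyshev'' then replaces the paper's longest sub-argument, which computes the denominator limit directly and establishes its positivity via a Cauchy--Schwarz analysis split into two cases according to whether ``perfect co-linearity'' ($\sigma_{jj^\prime} = \sqrt{\sigma_{jj}\sigma_{j^\prime j^\prime}}$ for all $j,j^\prime$) holds. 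For the numerator, your bound $\max_i Y_i^2(\bm z_j) \le (NL)^{1/2}$, read directly off Assumption~\ref{assum:3}, replaces the paper's argument through the ratios $\eta_i = Y_i^2(\bm z_j)/\sum_{i^\prime} Y_{i^\prime}^2(\bm z_j)$ (which also uses Assumption~\ref{assum:2}); both yield $\max_{k,i} b_{ki}^2 \rightarrow 0$. The one thing you must make explicit: the identity $\mathrm{Var}(S) = (N-1)^{-1}\sum_{j,k} b_{jk}^2$ is not part of Lemma~\ref{lemma:0} as stated, so it needs to be proved (a short computation exploiting the zero row and column sums of the array $b_{jk}$) or cited from \cite{Hoeffding:1951}; with that supplied, your argument is complete and, in my view, tidier than the paper's.
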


\begin{proof}
By Cramer-Wold theorem, we only need to prove that
\begin{equation}\label{eq:normal}
N^{1/2}  \bm t^\prime \bar{ \bm Y }^\textrm{obs} 
\stackrel{\mathbb D}{\longrightarrow}
N(0,  \bm t^\prime \bm \Sigma^\textrm{obs}  \bm t)
\end{equation}
for all $\bm t = (t_1, \ldots, t_J)^\prime \in \mathbb R^J.$ If $\bm t = \bm 0_J,$ \eqref{eq:normal} holds trivially. Otherwise $\bm t \in \mathbb R^J \backslash \{ \bm 0_J \}:$ 

First, by simple probability argument \cite[e.g.,][Lemma 1]{Lu:2016}, 
\begin{equation}\label{eq:lemma-1-5}
E(\bar{ \bm Y }^\textrm{obs}) =  \bm 0_J;
\quad
\textrm{Var} \{ \bar Y^\textrm{obs} ( \bm z_j) \} = \frac{1-\hat p_j}{\hat p_j} \frac{1}{N(N-1)} \sum_{i=1}^N Y_i^2( \bm z_j)
\quad
(j=1, \ldots, J),
\end{equation}
and
\begin{equation*}
\textrm{Cov} \{ \bar Y^\textrm{obs} ( \bm z_j), \bar Y^\textrm{obs} ( \bm z_{j^\prime}) \} = - \frac{1}{N(N-1)} \sum_{i=1}^N Y_i( \bm z_j)Y_i( \bm z_{j^\prime})
\quad
(j \ne j^\prime).
\end{equation*}
Therefore when $N \rightarrow \infty,$
\begin{equation}\label{eq:mean-variance}
E (N^{1/2}  \bm t^\prime \bar{ \bm Y }^\textrm{obs}) = 0,
\quad
\textrm{Var} (N^{1/2} \bm t^\prime \bar{ \bm Y }^\textrm{obs}) \rightarrow  \bm t^\prime  \bm \Sigma^\textrm{obs}  \bm t,
\end{equation}

Next, we prove that
\begin{equation}\label{eq:lemma-1-0}
\lim_{N \rightarrow \infty}\frac{\max_{1 \le i \le N} Y^2_i( \bm z_j)}{\sum_{i=1}^N Y^2_i( \bm z_j)} = 0
\quad
(j=1, \ldots, J).
\end{equation}
Let
\begin{equation*}
\eta_i = Y^2_i( \bm z_j) / \left\{ \sum_{i=1}^N Y_i^2( \bm z_j) \right\}
\quad
(i=1, \ldots, N),
\end{equation*}
and obviously $\sum_{i=1}^N \eta_i = 1.$ Furthermore, let $\eta = \max_{1 \le i \le N}\eta_i,$ and consequently
\begin{equation*}
\eta \leq \left( \sum_{i=1}^N \eta_i^2 \right)^{1/2} = N^{-1/2}
\left\{ 
\frac{1}{N}\sum_{i=1}^N Y_i^4( \bm z_j)
\right\}^{1/2}
\big ./
\left\{
\frac{1}{N} \sum_{i=1}^N Y_i^2( \bm z_j)
\right\}.
\end{equation*}
Therefore by Assumptions \ref{assum:2} and \ref{assum:3}
\begin{equation*}
\limsup_{N\rightarrow\infty} N^{1/2} \eta \le L^{1/2}/\sigma_{jj},
\end{equation*}
which implies \eqref{eq:lemma-1-0}.

Then, we adopt the notations in Lemma \ref{lemma:0} and let
\begin{equation}\label{eq:lemma-1-1}
a_{gi} = N^{1/2}
\begin{cases}
t_1Y_i( z_1)/n_1, & \text{for } 1 \le g \le n_1,\\
 & \vdots \\
t_JY_i( z_J)/n_J, & \text{for } \sum_{j=1}^{J-1} n_j + 1 \le g \le N.
\end{cases}
\quad
(i=1, \ldots, N),
\end{equation}
which implies that
\begin{equation*}
\sum_{g=1}^N a_{g, \nu_g} = N^{1/2}  \bm t^\prime \bar{ \bm Y }^\textrm{obs}. 
\end{equation*}
By Assumption \ref{assum:0}
\begin{equation*}
\sum_{i^\prime = 1}^N a_{gi^\prime} = 0,
\quad
(g=1, \ldots, N).
\end{equation*}
Therefore, if 
\begin{equation*}
\sum_{j^\prime = 1}^{j-1} n_{j^\prime} < g \le \sum_{j^\prime = 1}^j n_{j^\prime},
\end{equation*}
then by \eqref{eq:lemma-0-0} we have
\begin{equation}\label{eq:lemma-1-2}
b_{gi}
=
N^{1/2} t_jY_i( \bm z_j)/n_j - N^{-1/2} \sum_{j^\prime = 1}^J t_{j^\prime} Y_i( \bm z_{j^\prime}).
\end{equation}
The application of Lemma \ref{lemma:0} hinges on \eqref{eq:lemma-0-1}, to prove which we consider two cases:

First we discuss the case in which ``perfect co-linearity'' does not hold, i.e., there exists $j \ne j^\prime$ such that $\sigma_{j j^\prime} < \sqrt{\sigma_{jj} \sigma_{j^\prime j^\prime}}.$ On the one hand, \eqref{eq:lemma-1-2} and Cauchy-Schwartz inequality imply that
\begin{eqnarray*}
b^2_{gi}
& \le & 
2N t_j^2 Y^2_i( \bm z_j) / n_j^2 
+ \frac{2}{N} \|  \bm t \|_2^2
\sum_{j^\prime = 1}^J Y^2_i( \bm z_{j^\prime})
\\
& = & \frac{2}{N} t_j^2/\hat p_j^2 
\sum_{i^\prime=1}^N Y_{i^\prime}^2( \bm z_j)
\frac{
Y^2_i( \bm z_j)
}
{
\sum_{i^\prime=1}^N Y_{i^\prime}^2( \bm z_j)
}
+  \frac{2}{N} \|  \bm t \|_2^2
\sum_{j^\prime = 1}^J 
\left\{
\sum_{i^\prime=1}^N Y_{i^\prime}^2( \bm z_{j^\prime})
\right\}
\frac
{
Y^2_i( \bm z_{j^\prime})
}
{
\sum_{i^\prime=1}^N Y_{i^\prime}^2( \bm z_{j^\prime})
},
\end{eqnarray*}
and consequently by \eqref{eq:lemma-1-0} and Assumption \ref{assum:2}
\begin{equation}\label{eq:numerator}
\lim_{N \rightarrow \infty} \max_{1 \le g,i \le N} b_{gi}^2
\le 
2
\left(
\max_{1 \le j \le N} t_j^2 \sigma_{jj} / p_j^2
+ \| \bm t \|_2^2
\sum_{j^\prime = 1}^J \sigma_{j^\prime j^\prime}
\right)
\times 0
=
0.
\end{equation}
On the one hand, \eqref{eq:lemma-1-2} implies that
\begin{equation*}
\frac{1}{N} \sum_{g=1}^N \sum_{i=1}^N b^2_{gi} 
= \sum_{j=1}^J t_j^2 / \hat p_j 
\left\{
\frac{1}{N} \sum_{i=1}^N Y_i^2( \bm z_j)
\right\}
- \sum_{j=1}^J \sum_{j^\prime = 1}^J 
t_j t_{j^\prime} 
\left\{
\sum_{i=1}^N \frac{1}{N} Y_i( \bm z_j) Y_i( \bm z_{j^\prime})
\right\},
\end{equation*}
and consequently by Assumptions \ref{assum:1} and \ref{assum:2}
\begin{equation}\label{eq:denominator}
\lim_{N \rightarrow \infty} \frac{1}{N} \sum_{g=1}^N \sum_{i=1}^N b^2_{gi}
=
\sum_{j=1}^J t_j^2 \sigma_{jj} / p_j
- 
\sum_{j=1}^J \sum_{j^\prime = 1}^J t_j t_{j^\prime} \sigma_{jj^\prime}.
\end{equation}
We prove the right hand side of \eqref{eq:denominator} is always positive. By Cauchy-Schwartz inequality
\begin{equation}\label{eq:cauchy-inequ}
\sum_{j=1}^J t_j^2 \sigma_{jj} / p_j 
= \left( \sum_{j=1}^J t_j^2 \sigma_{jj} / p_j  \right) \left( \sum_{j=1}^J p_j \right)
\ge \sum_{j=1}^J \sum_{j^\prime = 1}^J |t_j| |t_{j^\prime}| \sqrt{\sigma_{jj} \sigma_{j^\prime j^\prime}}.
\end{equation}
Because $\bm t \ne \bm 0_J,$ the equality sign in \eqref{eq:cauchy-inequ} holds if and only if 
$
\bm t = \lambda (p_1 / \sqrt{\sigma_{11}}, \ldots, p_J / \sqrt{\sigma_{JJ}} )^\prime
$ 
for a non-zero constant $\lambda.$ Moreover, because 
$
\sigma_{j j^\prime} \le \sqrt{\sigma_{jj} \sigma_{j^\prime j^\prime}}
$ 
for all $j$ and $j^\prime,$
\begin{equation*}
\sum_{j=1}^J \sum_{j^\prime = 1}^J |t_j| |t_{j^\prime}| \sqrt{\sigma_{jj} \sigma_{j^\prime j^\prime} }  
\ge \sum_{j=1}^J \sum_{j^\prime = 1}^J t_j t_{j^\prime} \sigma_{jj^\prime}.
\end{equation*}
Additionally, the fact that there exists $j_1 \ne j_2$ such that 
$
\sigma_{j_1, j_2} < \sqrt{\sigma_{j_1, j_1} \sigma_{j_2, j_2}}
$ 
implies that if 
$
\bm t = \lambda (p_1 / \sqrt{\sigma_{11}}, \ldots, p_J / \sqrt{\sigma_{JJ}} )^\prime,
$ 
then
\begin{equation*}
\sum_{j=1}^J \sum_{j^\prime = 1}^J |t_j| |t_{j^\prime}| \sqrt{\sigma_{jj} \sigma_{j^\prime j^\prime} }  
> \sum_{j=1}^J \sum_{j^\prime = 1}^J t_j t_{j^\prime} \sigma_{jj^\prime}.
\end{equation*}
Thus we have proved that the right hand side of \eqref{eq:denominator} is positive for all $\bm t \in \mathbb R^J \backslash \{ \bm 0_J \}.$ Combining this fact with \eqref{eq:numerator}, we have proved that \eqref{eq:lemma-0-1} holds for all $\bm t \in \mathbb R^J \backslash \{ \bm 0_J \},$ and therefore \eqref{eq:normal} holds by Lemma \ref{lemma:0}.

Second, we discuss the case in which ``perfect co-linearity'' holds, i.e., $\sigma_{j j^\prime}= \sqrt{\sigma_{jj} \sigma_{j^\prime j^\prime}}$ for all $j$ and $j^\prime.$ If 
$
\bm t \ne \lambda (p_1 / \sqrt{\sigma_{11}}, \ldots, p_J / \sqrt{\sigma_{JJ}} )^\prime,
$ 
similarly as the argument for the first case
\begin{equation*}
\lim_{N \rightarrow \infty} \max_{1 \le g,i \le N} b_{gi}^2 = 0,
\quad
\lim_{N \rightarrow \infty} \frac{1}{N} \sum_{g=1}^N \sum_{i=1}^N b^2_{gi} > 0.
\end{equation*}
Therefore \eqref{eq:lemma-0-1} holds, and consequently \eqref{eq:normal} holds by Lemma \ref{lemma:0}. Otherwise, by the definition of $\bm \Sigma^\textrm{obs},$ 
\begin{equation*}
\bm t^\prime \bm \Sigma^\textrm{obs} \bm t 
= \lambda^2
\sum_{j=1}^J \left\{
(1 - p_j ) \sqrt{\sigma_{jj}}
-
\sum_{j^\prime \ne j} \frac{p_{j^\prime}\sigma_{j^\prime j}}{\sqrt{\sigma_{j^\prime j^\prime}}}
\right\}
=
\lambda^2
\sum_{j=1}^J \left\{
(1 - p_j ) \sqrt{\sigma_{jj}}
-
\sum_{j^\prime \ne j} p_{j^\prime} \sqrt{\sigma_{jj}}
\right\}
=0.
\end{equation*}
Therefore
$
\textrm{Var} (N^{1/2} \bm t^\prime \bar{ \bm Y }^\textrm{obs}) \rightarrow  0
$
by \eqref{eq:mean-variance}, and \eqref{eq:normal} holds trivially.

In summary, we have proved that \eqref{eq:normal} holds for all $\bm t \in \mathbb R^J,$ which completes the proof.
\end{proof}

\begin{lemma}\label{lemma:2}
When $N \rightarrow \infty,$
\begin{equation*}
\hat { \bm \beta }_j 
\stackrel{\mathbb P}{\longrightarrow}
\bm \zeta_j
\quad
(j=1, \ldots, J).
\end{equation*}
\end{lemma}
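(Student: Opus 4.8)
The plan is to write $\hat{\bm\beta}_j$ from \eqref{eq:est-reg-coef} as a product of an inverted ``design'' matrix and a ``numerator'' vector, show that the matrix converges deterministically while the vector converges in probability, and then conclude via continuity of matrix inversion together with Slutsky's theorem. First I would use Assumption \ref{assum:0} to replace every $\bm X_i - \bar{\bm X}$ by $\bm X_i$. The inverted matrix $N^{-1}\sum_{i=1}^N \bm X_i\bm X_i^\prime$ carries no randomness from the assignment mechanism and, by Assumption \ref{assum:2}, converges to the invertible matrix $\bm\Omega$; since inversion is continuous on the set of invertible matrices, its inverse converges to $\bm\Omega^{-1}$. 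It therefore remains to treat the random numerator, which I would split as
\begin{equation*}
\bm q_j := \frac{1}{n_j}\sum_{i=1}^N W_i(\bm z_j)\bm X_i\{Y_i(\bm z_j)-\bar Y^\textrm{obs}(\bm z_j)\} = \frac{1}{n_j}\sum_{i=1}^N W_i(\bm z_j)\bm X_i Y_i(\bm z_j) - \bar Y^\textrm{obs}(\bm z_j)\,\bar{\bm X}^\textrm{obs}(\bm z_j).
\end{equation*}

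The heart of the argument is a weak law of large numbers for the completely randomized assignment. For any fixed coordinate $k$, the quantity $n_j^{-1}\sum_{i=1}^N W_i(\bm z_j)X_{ik}Y_i(\bm z_j)$ is the mean of a simple random sample of size $n_j$ drawn without replacement from the finite population $\{X_{ik}Y_i(\bm z_j)\}_{i=1}^N$. Its randomization expectation equals the population mean $N^{-1}\sum_{i=1}^N X_{ik}Y_i(\bm z_j)$, which converges to the $k$th entry of $\bm\lambda_j$ by Assumption \ref{assum:2}. I would then bound its variance using the standard finite-population sampling variance formula (the same one already invoked in \eqref{eq:lemma-1-5}), which is of order $n_j^{-1}$ times the population second moment $N^{-1}\sum_{i=1}^N X_{ik}^2 Y_i^2(\bm z_j)$. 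Cauchy--Schwarz together with the uniform fourth-moment bound of Assumption \ref{assum:3} shows this second moment is at most $L$, so the variance is $O(N^{-1})\to 0$, and Chebyshev's inequality gives convergence in probability to $\bm\lambda_j$.

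An identical but simpler argument shows $\bar{\bm X}^\textrm{obs}(\bm z_j)\stackrel{\mathbb P}{\longrightarrow}\bar{\bm X}=\bm 0$ and $\bar Y^\textrm{obs}(\bm z_j)\stackrel{\mathbb P}{\longrightarrow}\bar Y(\bm z_j)=0$ (the latter also following directly from \eqref{eq:lemma-1-5} with Chebyshev, or from Lemma \ref{lemma:1}). Hence the second term of $\bm q_j$ converges in probability to $0\cdot\bm 0=\bm 0$, so $\bm q_j\stackrel{\mathbb P}{\longrightarrow}\bm\lambda_j$. Combining this with the deterministic convergence of the inverse matrix, Slutsky's theorem yields $\hat{\bm\beta}_j\stackrel{\mathbb P}{\longrightarrow}\bm\Omega^{-1}\bm\lambda_j=\bm\zeta_j$.

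I expect the main obstacle to be the weak-law step: one must verify that sampling \emph{without} replacement still delivers a vanishing variance, which hinges on pairing the finite-population variance formula with the Cauchy--Schwarz bound on the cross-moment $N^{-1}\sum_{i=1}^N X_{ik}^2 Y_i^2(\bm z_j)$ supplied by Assumption \ref{assum:3}. Everything else is bookkeeping plus appeals to continuity of inversion and to Slutsky's theorem.
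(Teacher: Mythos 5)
Your proposal is correct and follows essentially the same route as the paper: reduce to showing the random numerator $n_j^{-1}\sum_{i=1}^N W_i(\bm z_j)\bm X_i\{Y_i(\bm z_j)-\bar Y^\textrm{obs}(\bm z_j)\}\stackrel{\mathbb P}{\longrightarrow}\bm\lambda_j$, using the finite-population (sampling without replacement) variance formula, the Cauchy--Schwarz bound $N^{-1}\sum_i X_{ik}^2Y_i^2(\bm z_j)\le L$ from Assumption \ref{assum:3}, and Chebyshev's inequality, then combine with the deterministic convergence of the design matrix to $\bm\Omega$. The only cosmetic difference is that you invoke continuity of matrix inversion and Slutsky explicitly, while the paper leaves that final assembly implicit.
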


\begin{proof}
First, similarly as \eqref{eq:lemma-1-5}, for fixed $N$ and $k=1, \ldots, p,$
\begin{equation}\label{eq:lemma-2-1}
\textrm{Var} 
\{
\bar X_k^\textrm{obs}( \bm z_j)
\}
= \frac{1-\hat p_j}{\hat p_j} \frac{1}{N(N-1)} \sum_{i=1}^N X_{ik}^2,
\end{equation}
and
\begin{eqnarray}\label{eq:lemma-2-2}
\textrm{Var} 
\left\{ 
\frac{1}{n_j} \sum_{i=1}^N W_i( \bm z_j) X_{ik} Y_i( \bm z_j)
\right\} 
& = & \frac{1-\hat p_j}{\hat p_j} \frac{1}{N(N-1)} \sum_{i=1}^N \left\{ X_{ik}Y_i( \bm z_j) - \frac{1}{N} \sum_{i=1}^N X_{ik}Y_i( \bm z_j) \right\}^2 \nonumber \\
& \le & \frac{2(1-\hat p_j)}{(N-1) \hat p_j} \left[ \frac{1}{N} \sum_{i=1}^N X_{ik}^2Y_i^2( \bm z_j) + \frac{1}{N^2}\left\{ \sum_{i=1}^N X_{ik}Y_i( \bm z_j) \right\}^2 \right] \nonumber \\
& \le & \frac{1-\hat p_j}{\hat p_j} \frac{4L}{N-1}. 
\end{eqnarray}
The last step holds because by Cauchy-Schwartz inequality and Assumption \ref{assum:3}
\begin{equation*}
\frac{1}{N} \sum_{i=1}^N X_{ik}^2Y_i^2( \bm z_j) 
\le 
\left( \frac{1}{N} \sum_{i=1}^N X_{ik}^4 \right)^{1/2} 
\left\{ \frac{1}{N} \sum_{i=1}^N Y_i^4( \bm z_j) \right\}^{1/2}
\le L
\end{equation*}
and
\begin{equation*}
\left\{ \sum_{i=1}^N X_{ik}Y_i( \bm z_j) \right\}^2
\le 
\left( \sum_{i=1}^N X_{ik}^2 \right)
\left\{ \sum_{i=1}^N Y_i^2( \bm z_j) \right\}
\le 
N \left( \sum_{i=1}^N X_{ik}^4 \right)^{1/2} 
\left\{ \sum_{i=1}^N Y_i^4( \bm z_j) \right\}^{1/2}
\le N^2 L.
\end{equation*}

Second, by \eqref{eq:true-reg-coef} and \eqref{eq:est-reg-coef}, we only need to prove that when $N \rightarrow \infty,$
\begin{equation}\label{eq:lemma-2-3}
\frac{1}{n_j} \sum_{i=1}^N W_i( \bm z_j) \bm X_i \{Y_i( \bm z_j) - \bar Y^\textrm{obs}( \bm z_j)\}
\stackrel{\mathbb P}{\longrightarrow}
\bm \lambda_j.
\end{equation}
By \eqref{eq:lemma-1-5} and \eqref{eq:lemma-2-1} we have
$
\textrm{Var} \{ \bar Y^\textrm{obs} ( \bm z_j) \} \rightarrow 0,
$
and
$
\textrm{Var} 
\{
\bar X_k^\textrm{obs}(\bm  z_j)
\}
\rightarrow
0
$
for all
$
k.
$
Therefore by Chebyshev inequality
$
\bar Y^\textrm{obs} ( \bm z_j) \stackrel{\mathbb P}{\longrightarrow} 0
$
and
$
\bar{ X}^\textrm{obs} ( \bm z_j) \stackrel{\mathbb P}{\longrightarrow} 0.
$
By \eqref{eq:lemma-2-2}
\begin{equation*}
\textrm{Var} 
\left\{ 
\frac{1}{n_j} \sum_{i=1}^N W_i( \bm z_j) X_{ik} Y_i( \bm z_j)
\right\} 
\rightarrow 
0
\quad
(k=1, \ldots, p),
\end{equation*}
therefore
\begin{equation*}
\frac{1}{n_j} \sum_{i=1}^N W_i( \bm z_j) \bm X_i Y_i( \bm z_j) 
\stackrel{\mathbb P}{\longrightarrow}
\bm \lambda_j,
\end{equation*}
and consequently \eqref{eq:lemma-2-3} holds.
\end{proof}

\begin{lemma}\label{lemma:3} 
When N approaches infinity, in distribution
$
N^{1/2}\bar{ \bm Y }^\textrm{ca} \rightarrow N( 0,  \bm \Sigma^\textrm{ca}),
$
where
\begin{equation*}
\bm \Sigma^\textrm{ca} = 
\begin{bmatrix}
    \frac{1-p_1}{p_1} \tilde \sigma_{11} & -\tilde \sigma_{12} & \dots  & -\tilde \sigma_{1 J} \\
   -\tilde \sigma_{21} & \frac{1-p_2}{p_2} \tilde \sigma_{22} & \ldots & -\tilde \sigma_{2 J} \\
    \vdots  & \vdots & \ddots & \ldots \\
    -\tilde \sigma_{J 1}  & \dots  & \ldots & \frac{1-p_J}{p_J} \tilde \sigma_{JJ} \\
\end{bmatrix}.
\end{equation*}
\end{lemma}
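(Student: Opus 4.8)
The plan is to reduce the feasible covariate-adjusted estimator to an ``oracle'' estimator in which each estimated slope $\hat{\bm\beta}_j$ is replaced by its probability limit $\bm\zeta_j$ (supplied by Lemma \ref{lemma:2}), and then to recognize the oracle estimator as the observed mean of the residuals $R_i(\bm z_j) = Y_i(\bm z_j) - \bm X_i^\prime\bm\zeta_j$, to which Lemma \ref{lemma:1} applies essentially verbatim.

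First I would exploit the centering in Assumption \ref{assum:0}: since $\bar{\bm X} = \bm 0$, definition \eqref{eq:adj-single} becomes $\bar Y^\textrm{ca}(\bm z_j) = \bar Y^\textrm{obs}(\bm z_j) - \{\bar{\bm X}^\textrm{obs}(\bm z_j)\}^\prime\hat{\bm\beta}_j$. Introducing the observed residual mean $\bar R^\textrm{obs}(\bm z_j) = n_j^{-1}\sum_{i=1}^N W_i(\bm z_j)R_i(\bm z_j) = \bar Y^\textrm{obs}(\bm z_j) - \{\bar{\bm X}^\textrm{obs}(\bm z_j)\}^\prime\bm\zeta_j$, I obtain the decomposition
\begin{equation*}
N^{1/2}\bar Y^\textrm{ca}(\bm z_j) = N^{1/2}\bar R^\textrm{obs}(\bm z_j) + \{N^{1/2}\bar{\bm X}^\textrm{obs}(\bm z_j)\}^\prime(\bm\zeta_j - \hat{\bm\beta}_j).
\end{equation*}
The first term on the right is the object I ultimately want to understand; the second is a remainder that I must show is $o_p(1)$.

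To control the remainder I would first show $N^{1/2}\bar{\bm X}^\textrm{obs}(\bm z_j) = O_p(1)$ using the variance computation \eqref{eq:lemma-2-1}, which gives $\mathrm{Var}\{N^{1/2}\bar X_k^\textrm{obs}(\bm z_j)\}\to\{(1-p_j)/p_j\}\omega_{kk}$, so that each coordinate has mean zero and bounded limiting variance and is therefore bounded in probability by Chebyshev's inequality. Combining this with $\hat{\bm\beta}_j - \bm\zeta_j\stackrel{\mathbb P}{\longrightarrow}\bm 0$ from Lemma \ref{lemma:2}, the remainder is $O_p(1)\cdot o_p(1) = o_p(1)$. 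Stacking over $j$ then yields $N^{1/2}\bar{\bm Y}^\textrm{ca} = N^{1/2}\bar{\bm R}^\textrm{obs} + o_p(1)$, where $\bar{\bm R}^\textrm{obs} = \{\bar R^\textrm{obs}(\bm z_1),\ldots,\bar R^\textrm{obs}(\bm z_J)\}^\prime$.

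The remaining task is to establish $N^{1/2}\bar{\bm R}^\textrm{obs}\stackrel{\mathbb D}{\longrightarrow}N(0,\bm\Sigma^\textrm{ca})$ and then invoke Slutsky's theorem. Here I would observe that the residual vectors $\bm R_i$ are fixed finite-population quantities satisfying exactly the hypotheses placed on the $\bm Y_i$, with $\bm\Sigma$ replaced by $\tilde{\bm\Sigma}$: they are centered, since $N^{-1}\sum_i R_i(\bm z_j) = \bar Y(\bm z_j) - \bar{\bm X}^\prime\bm\zeta_j = 0$; their second moments converge to $\tilde{\bm\Sigma}$ by \eqref{eq:po-adj-second-moments}; and their fourth moments are uniformly bounded, this last point being the one routine verification, obtained from $R_i^4(\bm z_j)\le C\{Y_i^4(\bm z_j) + (\bm X_i^\prime\bm\zeta_j)^4\}$ together with $(\bm X_i^\prime\bm\zeta_j)^4\le\|\bm\zeta_j\|_2^4(\sum_k X_{ik}^2)^2\le p\|\bm\zeta_j\|_2^4\sum_k X_{ik}^4$ and Assumption \ref{assum:3}. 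Consequently Lemma \ref{lemma:1} applies to $\bm R_i$ in place of $\bm Y_i$ and produces the limiting covariance $\bm\Sigma^\textrm{ca}$, which is precisely $\bm\Sigma^\textrm{obs}$ with every $\sigma_{jj^\prime}$ replaced by $\tilde\sigma_{jj^\prime}$; combining this with the $o_p(1)$ reduction above via Slutsky's theorem gives the claim. I expect the main obstacle to be the reduction step rather than the central limit step: one must argue carefully that the product of the $O_p(1)$ normalized covariate imbalance and the $o_p(1)$ slope estimation error is genuinely negligible, since it is exactly this step that absorbs the randomness of $\hat{\bm\beta}_j$ and explains why estimating the slopes rather than knowing them costs nothing asymptotically.
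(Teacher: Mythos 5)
Your proposal is correct and follows essentially the same route as the paper's own proof: the same decomposition of $N^{1/2}\bar{\bm Y}^\textrm{ca}$ into the observed residual means $N^{1/2}\bar R^\textrm{obs}(\bm z_j)$ plus a remainder $\{N^{1/2}\bar{\bm X}^\textrm{obs}(\bm z_j)\}^\prime(\bm\zeta_j - \hat{\bm\beta}_j)$, the same verification (via Cauchy--Schwarz) that the residuals satisfy Assumption \ref{assum:3} so that Lemma \ref{lemma:1} applies to them, and the same $O_p(1)\cdot o_p(1)$ plus Slutsky argument to kill the remainder. Your write-up is in fact slightly more explicit than the paper's, e.g.\ in checking that the residuals are centered and in deriving $N^{1/2}\bar{\bm X}^\textrm{obs}(\bm z_j) = O_p(1)$ from \eqref{eq:lemma-2-1} and Chebyshev rather than citing Lemma \ref{lemma:2} for it.
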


\begin{proof} 
For $j=1, \ldots, J,$ let
$
\bar R^\textrm{obs}( \bm z_j) = n_j^{-1} \sum_{i=1}^N W_i( \bm z_j)R_i( \bm z_j).
$
By \eqref{eq:adj-single}
\begin{eqnarray*}
N^{1/2}\bar{ \bm Y }^\textrm{ca}
& \stackrel{\eqref{eq:adj-single}}{=} & 
\underbrace{
N^{1/2} 
\left\{
\begin{array}{c}
\bar R^\textrm{obs}( \bm z_1) \\
\vdots \\
\bar R^\textrm{obs}( \bm z_J) \\
\end{array}
\right\}
}_{ \Delta_1}
-
\underbrace{
N^{1/2}
\left\{
\begin{array}{c}
\bar{\bm X}^\textrm{obs}( \bm z_1)^\prime ( \hat{ \bm \beta}_1 - \bm \zeta_1 ) \\
\vdots\\
\bar{\bm X}^\textrm{obs}( \bm z_J)^\prime ( \hat{ \bm \beta}_J -  \bm \zeta_J ) \\
\end{array}
\right\}
}_{ \Delta_2}.
\end{eqnarray*}
On the one hand, $R_i( \bm z_j)$'s satisfy Assumption \ref{assum:3}, because by Cauchy-Schwartz inequality
\begin{equation*}
R_i( \bm z_j)^4
\leq
(p+1)^3 
\left\{
Y_i( \bm z_j)^4
+ \left( \max_{1 \le j \le J} \|  \bm \zeta_j \|_\infty^4 \right) \sum_{k=1}^p X_{ik}^4
\right\}.
\end{equation*}
By substituting $Y_i( \bm z_j)$ with $R_i( \bm z_j)$ and applying Lemma \ref{lemma:1}, we have
$
\Delta_1  
\stackrel{\mathbb D}{\longrightarrow}
N( 0,  \bm \Sigma^\textrm{ca}).   
$
On the other hand, by Lemma \ref{lemma:2} we have
$
N^{1/2} \bar{ \bm X}^\textrm{obs}( \bm z_j) = O_{P}( 1)
$
and
$
\hat{ \bm \beta}_j -  \bm \zeta_j = o_{P}( \bm 1),
$
which implies that
$
\Delta_2 
\stackrel{\mathbb P}{\longrightarrow}
0,
$
and by Slutsky Theorem
$
\Delta_1 -  \Delta_2 \stackrel{\mathbb D}{\longrightarrow} N( 0, \bm \Sigma^\textrm{ca}).
$
\end{proof}

\subsection{Main Results}

With the help of Lemmas \ref{lemma:0}--\ref{lemma:3}, we now state and prove the main results.

\begin{theorem}\label{prop:main}
The randomization-based and covariate-adjusted estimators are both asymptotically normal, i.e.,
\begin{equation*}
N^{1/2} \left\{ \hat \tau_\textrm{rb}(l) - \tau(l) \right\}  
\stackrel{\mathbb D}{\longrightarrow} 
N \{ 0, \sigma_\textrm{rb}^2(l) \},
\quad
N^{1/2} \left\{ \hat \tau_\textrm{ca}(l) - \tau(l) \right\} 
\stackrel{\mathbb D}{\longrightarrow} 
N \{ 0, \sigma_\textrm{ca}^2(l) \},
\end{equation*}
where
\begin{equation}\label{eq:avar-est}
\sigma_\textrm{rb}^2 (l)
=
\frac{1}{2^{2(K-1)}}
\left( 
\sum_{j=1}^J \frac{1-p_j}{p_j}\sigma_{jj} - \sum_{j \ne j^\prime} h_{jl}h_{j^\prime l} \sigma_{j j^\prime}
\right)
\end{equation}
and
\begin{equation}\label{eq:avar-adj}
\sigma_\textrm{ca}^2 (l)
=
\frac{1}{2^{2(K-1)}}
\left( 
\sum_{j=1}^J \frac{1-p_j}{p_j} \tilde \sigma_{jj} - \sum_{j \ne j^\prime} h_{jl}h_{j^\prime l} \tilde \sigma_{j j^\prime}
\right).
\end{equation}
\end{theorem}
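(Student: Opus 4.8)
The plan is to observe that, once the outcomes are centered, both scaled and centered estimators are fixed linear functionals of the vectors whose joint asymptotic normality has already been established in Lemmas \ref{lemma:1} and \ref{lemma:3}. Hence the whole argument collapses to a one-line application of the continuous-mapping/Cram\'er--Wold principle, followed by an explicit evaluation of a scalar quadratic form.

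First I would use Assumption \ref{assum:0}, which gives $\bar{\bm Y} = \bm 0_J$ and therefore $\tau(l) = (2^{K-1})^{-1}\bm h_l^\prime \bar{\bm Y} = 0$ for every $l$. Consequently, by \eqref{eq:est} and \eqref{eq:adj},
$$N^{1/2}\{\hat\tau_\textrm{rb}(l) - \tau(l)\} = \frac{1}{2^{K-1}} \bm h_l^\prime \left(N^{1/2}\bar{\bm Y}^\textrm{obs}\right), \quad N^{1/2}\{\hat\tau_\textrm{ca}(l) - \tau(l)\} = \frac{1}{2^{K-1}} \bm h_l^\prime \left(N^{1/2}\bar{\bm Y}^\textrm{ca}\right).$$
Each right-hand side is a fixed (nonrandom) linear map applied to a sequence converging in distribution to a multivariate normal: to $N(0,\bm\Sigma^\textrm{obs})$ by Lemma \ref{lemma:1} and to $N(0,\bm\Sigma^\textrm{ca})$ by Lemma \ref{lemma:3}. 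Since a linear image of a multivariate normal limit is univariate normal, I obtain
$$N^{1/2}\{\hat\tau_\textrm{rb}(l)-\tau(l)\} \stackrel{\mathbb D}{\longrightarrow} N\!\left(0,\ \frac{1}{2^{2(K-1)}}\,\bm h_l^\prime \bm\Sigma^\textrm{obs}\bm h_l\right),$$
together with the analogous statement for the adjusted estimator, with $\bm\Sigma^\textrm{obs}$ replaced by $\bm\Sigma^\textrm{ca}$.

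It then remains only to evaluate the two scalar quadratic forms. Writing $h_{jl}$ for the $j$th entry of $\bm h_l$ and using the explicit structure of $\bm\Sigma^\textrm{obs}$ from Lemma \ref{lemma:1},
$$\bm h_l^\prime \bm\Sigma^\textrm{obs}\bm h_l = \sum_{j=1}^J h_{jl}^2\,\frac{1-p_j}{p_j}\sigma_{jj} - \sum_{j\neq j^\prime} h_{jl}h_{j^\prime l}\,\sigma_{jj^\prime}.$$
The key simplification is that every entry of the model matrix satisfies $h_{jl}\in\{-1,1\}$, so $h_{jl}^2 = 1$; the diagonal sum collapses to $\sum_{j}(1-p_j)p_j^{-1}\sigma_{jj}$, which after dividing by $2^{2(K-1)}$ gives exactly \eqref{eq:avar-est}. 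Repeating the identical computation for $\bm\Sigma^\textrm{ca}$, whose entries are obtained from those of $\bm\Sigma^\textrm{obs}$ by replacing each $\sigma_{jj^\prime}$ with $\tilde\sigma_{jj^\prime}$, yields \eqref{eq:avar-adj}.

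I do not anticipate a genuine obstacle, since the substantive work---the combinatorial central limit theorem and the joint normality of $\bar{\bm Y}^\textrm{obs}$ and $\bar{\bm Y}^\textrm{ca}$---has already been done in Lemmas \ref{lemma:0}--\ref{lemma:3}. The only point requiring care is the bookkeeping in the quadratic form: cleanly separating the diagonal contribution (where $h_{jl}^2=1$ is used) from the off-diagonal cross terms, and confirming that the $\sigma \mapsto \tilde\sigma$ substitution of Lemma \ref{lemma:3} propagates unchanged through the same linear algebra.
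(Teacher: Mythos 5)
Your proposal is correct and follows essentially the same route as the paper's own (much terser) proof: both treat $\hat\tau_\textrm{rb}(l)$ and $\hat\tau_\textrm{ca}(l)$ as fixed linear functionals of $\bar{\bm Y}^\textrm{obs}$ and $\bar{\bm Y}^\textrm{ca}$, invoke the multivariate normal limits of Lemmas \ref{lemma:1} and \ref{lemma:3}, and read off the variances. Your write-up merely makes explicit two steps the paper leaves implicit, namely that $\tau(l)=0$ under Assumption \ref{assum:0} and that the quadratic form $\bm h_l^\prime \bm\Sigma^\textrm{obs}\bm h_l$ collapses to \eqref{eq:avar-est} because $h_{jl}^2=1$.
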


\begin{proof}[of Theorem \ref{prop:main}]
The asymptotically normality of $\hat \tau_\textrm{rb}(l)$ follows from the fact that it is a linear combination of $\bar{ \bm Y }^\textrm{obs},$ which by Lemma \ref{lemma:1} is asymptotically multivariate normal. Moreover, \eqref{eq:avar-est} holds by \eqref{eq:est}. We apply similar argument to $\hat \tau_\textrm{ca}(l),$ in which we use Lemma \ref{lemma:3}.
\end{proof}

\begin{corollary}\label{coro:1}
Let
\begin{equation}\label{eq:coro-1-0}
\bm \xi_{j j^\prime} = \left( \frac{p_{j^\prime}}{p_j} \right)^{1/2} h_{jl}  \bm \zeta_j - \left( \frac{p_j}{p_{j^\prime}}\right)^{1/2} h_{j^\prime l} \bm \zeta_{j^\prime}
\quad
(j,j^\prime = 1, \ldots, J).
\end{equation}
The difference of the asymptotic precisions between the randomization-based estimator and the covariate-adjusted estimator is
\begin{equation}\label{eq:coro-1-main}
\mathrm{Var} \{ \hat \tau_\textrm{rb}(l) \} - \mathrm{Var} \{ \hat \tau_\textrm{ca}(l) \}
= \frac{1}{2^{2K-1}N} \sum_{j=1}^J \sum_{j^\prime=1}^J  \bm \xi_{j j^\prime}^\prime  \bm \Omega \bm \xi_{j j^\prime}.
\end{equation}
\end{corollary}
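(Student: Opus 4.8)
The plan is to reduce the claim to a purely algebraic identity, since Theorem~\ref{prop:main} already supplies closed forms for both asymptotic variances. From that theorem, $\mathrm{Var}\{\hat\tau_\textrm{rb}(l)\} - \mathrm{Var}\{\hat\tau_\textrm{ca}(l)\} = N^{-1}\{\sigma_\textrm{rb}^2(l) - \sigma_\textrm{ca}^2(l)\}$, so the entire task is to show that $\sigma_\textrm{rb}^2(l) - \sigma_\textrm{ca}^2(l)$ equals $2^{-(2K-1)}\sum_{j,j'}\bm\xi_{jj'}^\prime\bm\Omega\bm\xi_{jj'}$. Subtracting \eqref{eq:avar-adj} from \eqref{eq:avar-est}, this difference is $2^{-2(K-1)}$ times $\sum_j \frac{1-p_j}{p_j}(\sigma_{jj}-\tilde\sigma_{jj}) - \sum_{j\ne j'}h_{jl}h_{j'l}(\sigma_{jj'}-\tilde\sigma_{jj'})$.

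The first key step is the identity $\sigma_{jj'} - \tilde\sigma_{jj'} = \bm\zeta_j^\prime\bm\Omega\bm\zeta_{j'}$. This follows directly from the definition $\tilde\sigma_{jj'} = \sigma_{jj'} - \bm\lambda_j^\prime\bm\Omega^{-1}\bm\lambda_{j'}$ in \eqref{eq:po-adj-second-moments} together with $\bm\zeta_j = \bm\Omega^{-1}\bm\lambda_j$, which yields $\bm\lambda_j^\prime\bm\Omega^{-1}\bm\lambda_{j'} = \bm\zeta_j^\prime\bm\Omega\bm\zeta_{j'}$. After substituting, the variance difference becomes $2^{-2(K-1)}\big(\sum_j \frac{1-p_j}{p_j}\bm\zeta_j^\prime\bm\Omega\bm\zeta_j - \sum_{j\ne j'}h_{jl}h_{j'l}\bm\zeta_j^\prime\bm\Omega\bm\zeta_{j'}\big)$.

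The second step expands the target quadratic form. Using symmetry of $\bm\Omega$ and the cancellation of the square-root factors in the cross term, $\bm\xi_{jj'}^\prime\bm\Omega\bm\xi_{jj'} = \frac{p_{j'}}{p_j}h_{jl}^2\bm\zeta_j^\prime\bm\Omega\bm\zeta_j + \frac{p_j}{p_{j'}}h_{j'l}^2\bm\zeta_{j'}^\prime\bm\Omega\bm\zeta_{j'} - 2h_{jl}h_{j'l}\bm\zeta_j^\prime\bm\Omega\bm\zeta_{j'}$. The only structural facts needed are that every entry of the model matrix satisfies $h_{jl}^2=1$ and that $\sum_j p_j=1$ (Assumption~\ref{assum:1}). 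Summing over all $j,j'$, the first two terms are equal under the index swap $j\leftrightarrow j'$ and each collapses to $\sum_j p_j^{-1}\bm\zeta_j^\prime\bm\Omega\bm\zeta_j$, while the diagonal ($j=j'$) part of the cross term contributes $-2\sum_j \bm\zeta_j^\prime\bm\Omega\bm\zeta_j$. Combining gives $\sum_{j,j'}\bm\xi_{jj'}^\prime\bm\Omega\bm\xi_{jj'} = 2\sum_j \frac{1-p_j}{p_j}\bm\zeta_j^\prime\bm\Omega\bm\zeta_j - 2\sum_{j\ne j'}h_{jl}h_{j'l}\bm\zeta_j^\prime\bm\Omega\bm\zeta_{j'}$, which is exactly twice the bracketed expression from the previous paragraph. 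Since $2^{2K-1}=2\cdot 2^{2(K-1)}$, the factor of two is absorbed and the identity follows.

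There is no genuine analytic obstacle here: the result is a deterministic consequence of Theorem~\ref{prop:main}, with no further probabilistic input. The one place requiring care is the bookkeeping in the double sum --- specifically, separating the diagonal from the off-diagonal contributions of the cross term after invoking $h_{jl}^2=1$, so that the $-\sum_j\bm\zeta_j^\prime\bm\Omega\bm\zeta_j$ piece correctly combines with $\sum_j p_j^{-1}\bm\zeta_j^\prime\bm\Omega\bm\zeta_j$ to reproduce the coefficient $\frac{1-p_j}{p_j}$ and leave the off-diagonal terms untouched.
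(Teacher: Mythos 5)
Your proposal is correct and follows essentially the same route as the paper's own proof: both reduce the claim to the algebraic identity $\sigma_{jj'}-\tilde\sigma_{jj'}=\bm\lambda_j^\prime\bm\Omega^{-1}\bm\lambda_{j'}=\bm\zeta_j^\prime\bm\Omega\bm\zeta_{j'}$, expand the quadratic form $\bm\xi_{jj'}^\prime\bm\Omega\bm\xi_{jj'}$ using $h_{jl}^2=1$ and $\sum_j p_j=1$, and match the double sum up to the factor $2$ absorbed into $2^{2K-1}=2\cdot 2^{2(K-1)}$. The only difference is direction of presentation (you expand the right-hand side and recognize the left, while the paper transforms the left-hand side until it is visibly half the sum), which is immaterial.
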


\begin{proof}[Proof of Corollary \ref{coro:1}]
On the one hand, by \eqref{eq:coro-1-0}
\begin{equation}\label{eq:coro-1-1}
\bm \xi_{j j^\prime}^\prime \bm \Omega \bm \xi_{j, j^\prime} =
\frac{p_{j^\prime}}{p_j} \bm \zeta_j^\prime \bm \Omega \bm \zeta_j
+
\frac{p_j}{p_{j^\prime}} \bm \zeta_{j^\prime}^\prime \bm \Omega \bm \zeta_{j^\prime}
-
h_{jl}h_{j^\prime l} \bm \zeta_j^\prime \bm \Omega \bm \zeta_{j^\prime}
-
h_{j^\prime l}h_{jl} \bm \zeta_{j^\prime}^\prime \bm \Omega \bm \zeta_j.
\end{equation}
On the other hand, by \eqref{eq:avar-est} and \eqref{eq:avar-adj}
\begin{eqnarray*}
2^{2(K-1)} \{ \sigma_\textrm{rb}^2 (l) - \sigma_\textrm{ca}^2 (l) \}
& = &  
\sum_{j=1}^J \frac{1-p_j}{p_j} ( \sigma_{jj} - \tilde \sigma_{jj} ) - \sum_{j\ne j^\prime}  h_{jl}h_{j^\prime l} (\sigma_{jj^\prime} - \tilde \sigma_{jj^\prime} ) \\
& = &  
\sum_{j=1}^J \frac{1-p_j}{p_j}  \bm \lambda_j^\prime \bm \Omega^{-1}  \bm \lambda_j - \sum_{j \ne j^\prime}  h_{jl}h_{j^\prime l}  \bm \lambda_j^\prime \bm \Omega^{-1} \bm \lambda_{j^\prime} \\
& = &
\sum_{j=1}^J \frac{1}{p_j}  \bm \zeta_j^\prime \bm \Omega \bm \zeta_j \sum_{j^\prime = 1}^J p_{j^\prime} - \sum_{j=1}^J \sum_{j^\prime=1}^J  h_{jl}h_{j^\prime l} \bm \zeta_j^\prime \bm \Omega \bm \zeta_{j^\prime} \\
& = & \frac{1}{2} \sum_{j=1}^J \sum_{j^\prime=1}^J \bm \xi_{j j^\prime}^\prime \bm \Omega \bm \xi_{j j^\prime}.
\end{eqnarray*}
The last equation holds by \eqref{eq:coro-1-1}.
\end{proof}

Theorem \ref{prop:main} illustrates the asymptotic unbiasedness and consistency of the randomization-based estimator and the covariate-adjusted estimator, and Corollary \ref{coro:1} illustrates the asymptotic precision by performing covariate adjustment. In particular, covariate adjustment never hurts asymptotic precision, and by \eqref{eq:coro-1-main} the sufficient and necessary condition for the randomization-based estimator and the covariate-adjusted estimator to be asymptotically equally precise is
\begin{equation*}
p_{j^\prime} h_{jl} \bm \zeta_j =  p_jh_{j^\prime l} \bm \zeta_{j^\prime}
\quad
(j, j^\prime = 1, \ldots, J).
\end{equation*}

\section{Concluding Remarks}\label{sec:discuss}

In this paper, we define the covariate-adjusted estimator for $2^K$ factorial designs, and derive the asymptotic precisions of the unadjusted and covariate-adjusted estimators. We confirm that both the unadjusted and covariate-adjusted estimators are asymptotically unbiased and normal, and the latter is more precise than the former. Moreover, we quantify the precision gained by performing covariate adjustment.

Our work implies multiple future directions. First, we can generalize our current framework to other factorial designs such as $3^k$ factorial designs or fractional factorial designs. Second, it is necessary to investigate the finite-sample properties of the estimators. In particular, although the covariate-adjusted estimator is asymptotically unbiased, it is biased from a finite-sample perspective. \cite{Lin:2013} showed that for randomized treatment-control studies the finite-sample bias of the covariate-adjusted estimator is $O(N^{-1}),$ and it would be helpful to generate this result to factorial designs. Moreover, \cite{Lu:2016} showed that for $2^K$ factorial designs we can adopt the amended Huber-White sandwich estimator HC2 \citep{MacKinnon:1985} for estimating the sampling variance of the unadjusted estimator, and therefore it would be helpful to have parallel results for the covariate-adjusted estimator. Third, it is possible to incorporate Bayesian analysis into our current framework.

\section*{Acknowledgement}

The author thanks several colleagues at the Microsoft Analysis and Experimentation Team, especially Randy Henne and Alex Deng, for inspiring this project, and Professor Tirthankar Dasgupta at Harvard University and Professor Peng Ding at UC Berkeley for helpful suggestions. Thoughtful comments from the Co-Editor-in-Chief, an Association Editor and a reviewer have substantially improved the quality of this paper.

\bibliographystyle{apalike}
\bibliography{factorial_covariate}

\end{document}